\newcommand{\defeq}{\stackrel{\text{def}}{=}}
\theoremstyle{definition}
\newtheorem{PROP}{Proposition}
\begin{document}
%
% paper title
% can use linebreaks \\ within to get better formatting as desired
\title{Symbol-Based Successive Cancellation List Decoder for Polar Codes}

% author names and affiliations
% use a multiple column layout for up to three different
% affiliations
\IEEEoverridecommandlockouts

\author{
\IEEEauthorblockN{Chenrong Xiong, Jun Lin and Zhiyuan Yan}
%\thanks{This work was supported in part by NSF under Grants ECCS-0925890 and ECCS-1055877.}

\IEEEauthorblockA{Department of Electrical and Computer Engineering, Lehigh
  University, Bethlehem, PA 18015 USA\\
Email: \{chx310, jul311, zhy6\}@lehigh.edu}

%\IEEEauthorblockN{Michael Shell}
%\IEEEauthorblockA{School of Electrical and\\Computer Engineering\\
%Georgia Institute of Technology\\
%Atlanta, Georgia 30332--0250\\
%Email: http://www.michaelshell.org/contact.html}
%\and
%\IEEEauthorblockN{Homer Simpson}
%\IEEEauthorblockA{Twentieth Century Fox\\
%Springfield, USA\\
%Email: homer@thesimpsons.com}
%\and
%\IEEEauthorblockN{James Kirk\\ and Montgomery Scott}
%\IEEEauthorblockA{Starfleet Academy\\
%San Francisco, California 96678-2391\\
%Telephone: (800) 555--1212\\
%Fax: (888) 555--1212}
}

% conference papers do not typically use \thanks and this command
% is locked out in conference mode. If really needed, such as for
% the acknowledgment of grants, issue a \IEEEoverridecommandlockouts
% after \documentclass

% for over three affiliations, or if they all won't fit within the width
% of the page, use this alternative format:
%
%\author{\IEEEauthorblockN{Michael Shell\IEEEauthorrefmark{1},
%Homer Simpson\IEEEauthorrefmark{2},
%James Kirk\IEEEauthorrefmark{3},
%Montgomery Scott\IEEEauthorrefmark{3} and
%Eldon Tyrell\IEEEauthorrefmark{4}}
%\IEEEauthorblockA{\IEEEauthorrefmark{1}School of Electrical and Computer Engineering\\
%Georgia Institute of Technology,
%Atlanta, Georgia 30332--0250\\ Email: see http://www.michaelshell.org/contact.html}
%\IEEEauthorblockA{\IEEEauthorrefmark{2}Twentieth Century Fox, Springfield, USA\\
%Email: homer@thesimpsons.com}
%\IEEEauthorblockA{\IEEEauthorrefmark{3}Starfleet Academy, San Francisco, California 96678-2391\\
%Telephone: (800) 555--1212, Fax: (888) 555--1212}
%\IEEEauthorblockA{\IEEEauthorrefmark{4}Tyrell Inc., 123 Replicant Street, Los Angeles, California 90210--4321}}

% use for special paper notices
%\IEEEspecialpapernotice{(Invited Paper)}

% make the title area
\maketitle

\begin{abstract}
Polar codes is promising because they can provably achieve the channel capacity
while having an explicit construction method. Lots of work have been done for
the bit-based decoding algorithm for polar codes. In this paper, generalized
symbol-based successive cancellation (SC) and SC list decoding algorithms are
discussed. A symbol-based recursive channel combination relationship is proposed
to calculate the symbol-based channel transition probability. This proposed method
needs less additions than the maximum-likelihood decoder used by the existing
symbol-based polar decoding algorithm. In addition, a two-stage list pruning
network is proposed to simplify the list pruning network for the symbol-based SC
list decoding algorithm.
\end{abstract}

\begin{IEEEkeywords}
Error control codes, polar codes, successive cancellation decoding, list decoding
\end{IEEEkeywords}

% For peer review papers, you can put extra information on the cover
% page as needed:
% \ifCLASSOPTIONpeerreview
% \begin{center} \bfseries EDICS Category: 3-BBND \end{center}
% \fi
%
% For peerreview papers, this IEEEtran command inserts a page break and
% creates the second title. It will be ignored for other modes.
\IEEEpeerreviewmaketitle

\section{Introduction}
\label{sec:intro}
Since polar codes were introduced by Arikan \cite{5075875}, they have attracted lots
of interest in the fields of communication and coding theory, because they
can provably achieve the channel capacity not only for arbitrary discrete memoryless channels, but
also for any continuous memoryless channel \cite{5351487}. 
%The second
%reason why polar codes are promising is their explicit construction method. This leads
%to low encoding and decoding complexity. For example, the successive
%cancellation (SC) decoding algorithm \cite{5075875} has complexity of
%$\mathcal{O}(N\log N)$, where $N$ represents the code length. 
However, their capacity
approaching can be achieved only when the code length is large enough ($N >
2^{20}$ \cite{6327689}) under the SC decoding algorithm. For short or moderate
code length, in terms of the error performance, polar codes with the SC decoding
algorithm is worse than turbo codes or low-density parity-check codes
\cite{6297420, 6033837}.

To improve the error performance of polar codes, lots of work have been
done. Systematic polar codes \cite{5934670} was proposed to reduce the bit error
rate while guaranteeing the same frame error rate (FER) compared with their
non-systematic counterparts. 
%A Viterbi Algorithm was used to perform maximum
%likelihood (ML) decoding for polar codes in \cite{ML_polar}, but the decoding
%complexity grows exponentially with the code length. An sphere decoding
%algorithm \cite{6283643} for polar codes was designed to perform ML decoding with
%a cubic complexity. Although these two decoding algorithms outperform the SC
%decoding algorithm, high complexity makes them impractical to be implemented.
An SC list decoding algorithm for polar codes was proposed in \cite{6033904}. The SC list decoding algorithm
outperforms the SC decoding algorithm and achieves the error performance close to that of
the ML decoding algorithm at the cost of complexity of $\mathcal{O}(LN\log N)$,
where $L$ is the list size. Moreover, the concatenation of polar codes with
cyclic redundancy check (CRC) codes was introduced in \cite{6297420, Tal2012}. To decode
the CRC-concatenated polar codes, a CRC detector is used in the SCL decoding
algorithm to help the codeword determination. The combination of an SCL decoding
algorithm and a CRC detector is called CRC-aided SCL (CA-SCL) decoding
algorithm. \cite{Tal2012} shows that with the CA-SCL decoding algorithm, the
error performance of a (2048, 1024) CRC-concatenated polar code is better that
of a (2304, 1152) LDPC code, which is used in the WiMax standard \cite{1603394}.

To implement decoders for polar codes, several works have been done for the SC
decoding algorithm. Arikan \cite{5075875} showed that a fully parallel SC
decoder has a latency of $2N-1$ clock cycles. This decoder has complexity of
$\mathcal{O}(N\log N)$. A tree SC decoder and a line SC decoder with complexity
of  $\mathcal{O}(N)$ were proposed in \cite{5946819}. These two decoders have
the same latency as the fully parallel SC decoder. To reduce complexity
further, Leroux \cite{6327689} proposed a semi-parallel SC decoder for polar
codes by taking advantage of the recursive structure of polar codes to reuse
processing resources. 
%Assuming the number of processing elements (PEs) are $P$
%$(P=2^p\leq N)$, the latency of the semi-parallel SC decoder is
%$2N+\frac{N}{P}\log_2 (\frac{N}{4P})$ clock cycles which is higher than that of the fully
%parallel SC decoder. 
To reduce the latency, a simplified SC (SSC) polar decoder
was introduced in \cite{6065237} and it was further analyzed in
\cite{6680761}. In the SSC polar decoder, a polar code is converted to a binary
tree including three types of nodes: rate-one, rate-zero and rate-$R$
nodes. Based on the SSC polar decoder, the ML SSC decoder
makes use of the ML decoding algorithm to deal with 
rate-$R$ nodes in \cite{6464502, 6804939}. However, SSC and ML-SSC polar decoders
depend on positions of information bits and frozen bits, and are code-specific consequently. 
In \cite{6475198}, a pre-computation look-ahead technique was
proposed to help the tree SC decoder shorten the latency by half. An efficient SCL decoder
architecture was proposed in\cite{ListPolarJun1}.
Recently, parallel decoders of polar codes were proposed in\cite{ParSC}. To
avoid ambiguity between the aforementioned fully parallel SC decoder in
\cite{5075875} and parallel decoders in \cite{ParSC}, we call the latter as
symbol-based polar decoders in this paper because an $M$-bit symbol-based
polar decoder decodes $M$ bits at a time instead of only one bit. However,
\cite{ParSC} is focused on some specific case and does not provide a general
discussion. Meanwhile, it uses the ML decoder to calculate the symbol-based
channel transition probability, which is not complexity-efficient enough.
%An $M$-bit
%symbol-based polar decoder needs $M$ component decoders. Based on outputs of component
%decoders, an $M$-bit symbol-based polar decoder decodes $M$ bits at a time instead of
%only one bit. However, \cite{ParSC} is based on the recursive structure of the
%recursive structure of the generator matrix and does not provide a general
%discussion on how to calculate reliabilities of list candidates. 

The main contributions of this paper are:
\begin{itemize}
\item Generalized symbol-based polar decoding algorithms are
  discussed. Furthermore, a symbol-based recursive channel combination relationship is derived
  to calculate the symbol-based channel transition
  probability. The proposed method needs less additions than the ML detector used in \cite{ParSC}.
\item An $M$-bit symbol-based SCL polar decoder needs to find $L$ most-reliable
  lists among $2^ML$ list candidates. A two-stage list pruning network are
  proposed to perform this list pruning function. $2^ML$ list candidates are
  divided into $L$ groups. Each group has $2^M$ list candidates. In the first
  stage, $q$ most-reliable lists for each group are found. Then, $L$
  most-reliable list candidates are sorted out from $qL$ list candidates
  generated by the first stage. If $q<L$, the two-stage list pruning network can
  achieve lower complexity and a shorter critical path delay than the list
  pruning network with $q=L$.
\end{itemize}

The rest of our paper is organized as follows. Section~\ref{sec:review} briefly reviews
polar codes and existing decoding algorithms. In Section~\ref{sec:SBDecoder},
the generalized $M$-bit symbol-based SC and SCL decoding algorithms for polar
codes are discussed. Based on the Arikan's recursive channel transformations, we
derive the symbol-based recursive channel combination relationship to calculate the
symbol-based channel transition probability. To simplify the selection of the
list candidates, a two-stage list pruning network is proposed in
Section~\ref{sec:TSLPN}. Some conclusions are given in Section~\ref{sec:conclusion}.

\section{Polar Codes and Existing Decoding Algorithms}
\label{sec:review}
\subsection{Polar Codes}

Polar codes are linear block codes. The block length of polar codes is
restricted to a power of two, $N=2^n$ for $n \geq 2$. We follow the notation for vectors in \cite{5075875}, namely $u_a^b=(u_a,u_{a+1},\cdots,u_{b-1},u_b)=(u_a^{b-1},u_b)$
. Assume $\mathbf{u}=u_0^{N-1}=(u_0,u_1,\cdots,u_{N-1})$ is the encoding bit
sequence. Let 
$F=\left[
\begin{array}{ccc}
1 & 0 \\
1 & 1 
\end{array}
\right]$. The corresponding encoded bit sequence
$\mathbf{x}=x_0^{N-1}=(x_0,x_1,\cdots,x_{N-1})$ is generated by
\begin{equation}
\mathbf{x} = \mathbf{u}B_NF^{\otimes n},
\label{equ:encoder}
\end{equation}
where $B_N$ is an $N\times N$ bit-reversal permutation matrix and $F^{\otimes
  n}$ denotes the $n$-th Kronecker power of $F$. 

For any index set $\mathcal{A} \subset \mathcal{I}=\{0,1,\cdots, N-1\}$, let
$\mathbf{u}_{\mathcal{A}}$ denote the sub-sequence of $\mathbf{u}$ defined by
$\mathbf{u}_{\mathcal{A}}=(u_i: i\in
\mathcal{A})$. Denote the complement of $\mathcal{A}$ in $\mathcal{I}$ as
$\mathcal{A}^c$. Let $\mathbf{u}_{\mathcal{A}^c}=(u_i:0 \leq i < N, i\notin
\mathcal{A})$. For an $(N,K)$ polar code, the encoding bit sequence is
grouped into two parts: a $K$-element part $\mathbf{u}_{\mathcal{A}}$ which
carries information bits, and $\mathbf{u}_{\mathcal{A}^c}$ whose elements are
predefined frozen bits. For the sake of convenience, frozen bits are set to
be zero. 

\subsection{SC Decoding Algorithm for Polar Codes}
Given a transmitted codeword $\mathbf{x}$ and the corresponding received word
$\mathbf{y}$, the SC decoding algorithm for an $(N,K)$ polar code decodes the
encoding bit sequence $\mathbf{u}$ from $u_0$ to $u_{N-1}$ successively one by
one as shown in Alg.~\ref{alg:SC}. Here, $\hat{\mathbf{u}} = (\hat{u}_0,
\hat{u}_1,\cdots,\hat{u}_{N-1})$  represents the estimated value for
$\mathbf{u}$. ${\rm P}(\mathbf{y},\hat{u}_0^{j-1}|u_j)$ is the
probability that $\mathbf{y}$ is received and the previously decoded bits are $\hat{u}_0^{j-1}$
given $u_j$ is zero or one. 

\begin{algorithm}
\caption{SC Decoding Algorithm \cite{5075875}}
\label{alg:SC}
\LinesNumbered
\For{$j=0:N-1$}{
\lIf{$j\in \mathcal{A}^c$}{
$\hat{u}_j=0$
}
\uElse{
\lIf{$\frac{{\rm P}(\mathbf{y},\hat{u}_0^{j-1}|1)}{{\rm P}(\mathbf{y},\hat{u}_0^{j-1}|0)} \geq 1 $}{
$\hat{u}_j = 1$
}
\lElse{
$\hat{u}_j = 0$
}
}
}
\end{algorithm}

To calculate ${\rm P}(\mathbf{y},\hat{u}_0^{j-1}|u_j)$, the following Arikan's
recursive channel transformations \cite{5075875} are used:
\begin{equation}
\begin{split}
\label{eq:ArikanT1}
{\rm
  P}&(y_0^{\Gamma-1},u_0^{2i-1}|u_{2i})\\
&=\frac{1}{2}\sum_{u_{2i+1}}\Bigl[{\rm
  P}(y_{0}^{{\Gamma}/2-1},u_{0,e}^{2i-1}\oplus u_{0,o}^{2i-1}|u_{2i}\oplus
u_{2i+1})\\
&\hspace{20mm}\cdot{\rm P}(y_{{\Gamma}/2}^{{\Gamma}-1},u_{0,o}^{2i-1}|u_{2i+1})\Bigr],
\end{split}
\end{equation}
and
\begin{equation}
\begin{split}
\label{eq:ArikanT2}
{\rm
  P}&(y_0^{\Gamma-1},u_0^{2i}|u_{2i+1})\\
&=\frac{1}{2}{\rm
  P}(y_{0}^{{\Gamma}/2-1},u_{0,e}^{2i-1}\oplus u_{0,o}^{2i-1}|u_{2i}\oplus
u_{2i+1})\\
&\hspace{20mm}\cdot{\rm P}(y_{{\Gamma}/2}^{{\Gamma}-1},u_{0,o}^{2i-1}|u_{2i+1}),
\end{split}
\end{equation}
where $1 \leq \Gamma=2^{\gamma} \leq N$, and $0 \leq i < \frac{N}{2}$. 

%For the log-likelihood (LL), Eq.~\eqref{eq:ArikanT1} and \eqref{eq:ArikanT2} can
%be approximated as:
%\begin{equation*}
%\begin{split}
%{\rm
%  LL}&(y_0^{\Gamma-1},u_0^{2i-1}|u_{2i})\\
%&=\max\biggl(\Bigl[{\rm
%  LL}(y_{0}^{{\Gamma}/2-1},u_{0,e}^{2i-1}\oplus u_{0,o}^{2i-1}|u_{2i}\oplus
%0)\\
%&\hspace{20mm}+{\rm LL}(y_{{\Gamma}/2}^{{\Gamma}-1},u_{0,o}^{2i-1}|0)\Bigr],\\
%&\hspace{13mm}\Bigl[{\rm
%  LL}(y_{0}^{{\Gamma}/2-1},u_{0,e}^{2i-1}\oplus u_{0,o}^{2i-1}|u_{2i}\oplus
%1)\\
%&\hspace{20mm}+{\rm
%  LL}(y_{{\Gamma}/2}^{{\Gamma}-1},u_{0,o}^{2i-1}|1)\Bigr]\biggr)-\log 2,\\
%{\rm
%  LL}&(y_0^{\Gamma-1},u_0^{2i}|u_{2i+1})\\
%&={\rm
%  LL}(y_{0}^{{\Gamma}/2-1},u_{0,e}^{2i-1}\oplus u_{0,o}^{2i-1}|u_{2i}\oplus
%u_{2i+1})\\
%&\hspace{20mm}+{\rm LL}(y_{{\Gamma}/2}^{{\Gamma}-1},u_{0,o}^{2i-1}|u_{2i+1})-
%\log 2.
%\end{split}
%\end{equation*}
%
%To simplify the calculation, we can discard the constant part, $-\log 2$, when we calculate the LLs with the two
%equation above since this part can be treated as a global offset for all LLs
%which does not affect the decoding procedure and the decision result.

\subsection{SCL Decoding Algorithm for Polar Codes}
Instead of making decision for each information bit of $\mathbf{u}$ in an SC
decoding algorithm, the SCL
decoding algorithm \cite{6033904} creates two paths in which the bit is assumed to be 0 and
1, respectively. If the number of paths is greater than the list size $L$, the $L$ most-reliable
paths are selected out. At the end of the decoding procedure, the most reliable
path is chosen as $\hat{\mathbf{u}}$. The SCL decoding algorithm
is described in Alg. \ref{alg:SCL}. Without loss of generality, assume $L$ to
be a power of two, i.e.\ $L=2^l$. Let 
$\mathbf{L}_i=((\mathcal{L}_i)_0,(\mathcal{L}_i)_1,\cdots,(\mathcal{L}_i)_{N-1})$
represent the $i$-th list vector, where $0 \leq i < L$. 

\begin{algorithm}
\caption{SCL Decoding Algorithm \cite{6033904}}
\label{alg:SCL}
\LinesNumbered
$\alpha=1$\;
\For{$j=0:N-1$}{
\uIf{$j\in \mathcal{A}^c$}{
\For{$i=0:\alpha-1$}{
$(\mathcal{L}_i)_j = 0$\;
}
}
\uElseIf{$2\alpha\leq L$}{
\For{$i=0:\alpha-1$}{
$(\mathcal{L}_i)_0^j=((\mathcal{L}_i)_0^{j-1},0)$\;
$(\mathcal{L}_{i+\alpha})_0^j=((\mathcal{L}_i)_0^{j-1},1)$\;
}
$\alpha=2\alpha$\;
}
\Else{
\For{$i=0:L-1$}{
${\sf S}[i]{\sf .P}={\rm P}(\mathbf{y},(\mathcal{L}_i)_0^{j-1}|0)$\;
${\sf S}[i]{\sf .L}=(\mathcal{L}_i)_0^{j-1}$\;
${\sf S}[i]{\sf .U}=0$\;
${\sf S}[i+L]{\sf .P}={\rm P}(\mathbf{y},(\mathcal{L}_i)_0^{j-1}|1)$\;
${\sf S}[i+L]{\sf .L}=(\mathcal{L}_i)_0^{j-1}$\;
${\sf S}[i+L]{\sf .U}=1$\;
}
{\tt sortPDecrement}({\sf S})\;
\For{$i=0:L-1$}{
$(\mathcal{L}_i)_0^j=({\sf S}[i]{\rm .L},{\sf S}[i]{\rm .U})$\;
}
$\alpha=L$\;
}
}
$\hat{\mathbf{u}}=\mathbf{L}_0$\;
\end{algorithm}

Here, {\sf S} is a structure type array with the size of $2L$. Each element of
{\sf S} has three members: {\sf P}, {\sf L}, and {\sf U}. The function {\tt
  sortPDecrement} sorts the array {\sf S} by the decreasing order of {\sf P}.

\subsection{CA-SCL Decoding Algorithm for Polar Codes}

The CA-SCL decoding algorithm is used for the CRC-concatenated polar codes. The
difference between the CA-SCL \cite{Tal2012} and the SCL decoding algorithms is how to
make the final decision for $\hat{\mathbf{u}}$. If there is at least one path
satisfying the CRC constraint, the most-reliable CRC-valid path is chosen for
$\hat{\mathbf{u}}$. Otherwise, the decision rule of the SCL decoding algorithm
is used for the CA-SCL decoding algorithm. Since now, without being specified, polar codes
mentioned in the following sections are CRC-concatenated polar codes.

\section{$M$-bit Symbol-based Decoding Algorithm for Polar Codes}
\label{sec:SBDecoder}
\subsection{Generalized Symbol-based SC Decoding Algorithm for Polar Codes}
In \cite{ParSC}, only two-bit, four-bit, eight-bit symbol-based decoding
algorithm for polar codes are discussed. Here, a generalized $M$-bit symbol-based
decoding algorithm for polar codes is discussed. Without loss of generality, assume $M$ is a
power of two, i.e.\ $M=2^m( 0\leq m \leq n)$. Define $\mathcal{IM}_j \defeq
\{jM,jM+1,\cdots,jM+M-1\}\subset \mathcal{I}$, for $0 \leq j <
\frac{N}{M}$. $\mathcal{AM}_j$ and $\mathcal{AM}_j^c$ are defined as:
\begin{equation}
\mathcal{AM}_j \defeq \mathcal{IM}_j \cap \mathcal{A}\hspace{3mm}{\rm and}\hspace{3mm} \mathcal{AM}_j^c \defeq \mathcal{IM}_j \cap \mathcal{A}^c.
\end{equation}

Then the decision rule of the $M$-bit symbol-based SC decoding algorithm can be
described as,
%in Eq.~\eqref{equ:MB_DR}.
\begin{equation}
\label{equ:MB_DR}
\hat{u}_{jM}^{jM+M-1} =
\underset{\substack{
u_{\mathcal{AM}_j}\in\{0,1\}^{\lvert\mathcal{AM}_j\rvert} \\
u_{{\mathcal{AM}_j}^c}\in\{0\}^{\lvert\mathcal{AM}_j^c\rvert}}}{\arg\max}{\rm P}(\mathbf{y},\hat{u}_0^{jM-1}|u_{jM}^{jM+M-1}),
\end{equation}
where $\lvert\mathcal{AM}_j\rvert$ represents the cardinality of
$\mathcal{AM}_j$. If $M=N$, this
decoding algorithm is a maximum-likelihood sequence decoding algorithm.

If all bits of $\mathbf{u}$ are independent and each bit has an equal probability
of being a 0 or 1, the following symbol-based recursive channel combination relationship can be used to calculate the symbol-based channel transition probability ${\rm P}(\mathbf{y},u_0^{jM-1}|u_{jM}^{jM+M-1})$: 
\begin{PROP}
For any $0\leq m \leq n$, $N=2^n$, $M=2^m$, $0 \leq j < \frac{N}{M}$,
assume $v_0^{N-1}\defeq u_{0,e}^{2N-1}\oplus u_{0,o}^{2N-1}$ and
$v_N^{2N-1}\defeq u_{0,o}^{2N-1}$, then
\begin{equation}
\label{eq:prop1}
\begin{split}
{\rm P}(y_{0}^{2N-1},&u_0^{2jM-1}|u_{2jM}^{2jM+2M-1}) = \\
&{\rm P}(y_0^{N-1},v_0^{jM-1}|v_{jM}^{jM+M-1})\\
&\cdot{\rm P}(y_{N}^{2N-1},v_N^{N+jM-1}|v_{N+jM}^{N+jM+M-1})
\end{split}
\end{equation}
\end{PROP}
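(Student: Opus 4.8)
The plan is to reduce this symbol-level factorization to the one-step channel-combining identity that already underlies Arikan's transformations \eqref{eq:ArikanT1}--\eqref{eq:ArikanT2}, followed by a change of summation variable from the input bits $u$ to the combined bits $v$. First I would write each symbol transition probability as a normalized marginal of the full block channel. Under the assumption that the bits are independent and uniform, conditioning on the $M$ bits $u_{jM}^{jM+M-1}$ and observing $u_0^{jM-1}$ leaves the remaining $N-M$ bits uniform, so
\begin{equation*}
{\rm P}(y_0^{N-1},u_0^{jM-1}|u_{jM}^{jM+M-1})=\frac{1}{2^{N-M}}\sum_{u_{jM+M}^{N-1}}{\rm P}(y_0^{N-1}|u_0^{N-1}),
\end{equation*}
and in the same way the left-hand side of \eqref{eq:prop1} equals $\frac{1}{2^{2N-2M}}\sum_{u_{2jM+2M}^{2N-1}}{\rm P}(y_0^{2N-1}|u_0^{2N-1})$, the prefactors being just the uniform priors over the bits that are neither conditioned on nor observed.

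Next I would invoke the recursive factorization of the length-$2N$ channel into two length-$N$ channels,
\begin{equation*}
{\rm P}(y_0^{2N-1}|u_0^{2N-1})={\rm P}(y_0^{N-1}|v_0^{N-1})\,{\rm P}(y_N^{2N-1}|v_N^{2N-1}),
\end{equation*}
which is exactly the channel-combining definition from which \eqref{eq:ArikanT1}--\eqref{eq:ArikanT2} are derived and may therefore be taken as given. Substituting this into the marginal for the left-hand side, I would then relabel the summation from $u_0^{2N-1}$ to $v_0^{2N-1}$. This relabeling is a bijection, since $v_k=u_{2k}\oplus u_{2k+1}$ and $v_{N+k}=u_{2k+1}$ invert to $u_{2k+1}=v_{N+k}$ and $u_{2k}=v_k\oplus v_{N+k}$, so the value of the sum is preserved.

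The heart of the argument is the index bookkeeping. For each $0\le t<N$ the pair $(u_{2t},u_{2t+1})$ is carried bijectively to $(v_t,v_{N+t})$; applying this to the three blocks of $u$-indices, the conditioned $2M$-bit symbol $u_{2jM}^{2jM+2M-1}$ maps to the two $M$-bit symbols $v_{jM}^{jM+M-1}$ and $v_{N+jM}^{N+jM+M-1}$, the observed past $u_0^{2jM-1}$ maps to $v_0^{jM-1}$ together with $v_N^{N+jM-1}$, and the marginalized future $u_{2jM+2M}^{2N-1}$ maps to $v_{jM+M}^{N-1}$ together with $v_{N+jM+M}^{2N-1}$. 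Since the first factor depends only on $v_0^{N-1}$ and the second only on $v_N^{2N-1}$, the double marginal separates into a product of two independent single marginals. Identifying each one (with its $\frac{1}{2^{N-M}}$ prefactor, and using $\frac{1}{2^{2N-2M}}=\left(\frac{1}{2^{N-M}}\right)^{2}$) as an $M$-bit symbol transition probability over a length-$N$ channel gives precisely the right-hand side of \eqref{eq:prop1}.

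I expect the main obstacle to be this very index accounting rather than any analytic difficulty: the left-hand side groups $2M$ consecutive input bits while the right-hand side groups $M$ bits in each of the two halves, and the even/odd interleaving built into the map $u\mapsto v$ is exactly where an off-by-one or a mismatched block boundary would slip in. Once the correspondence $(u_{2t},u_{2t+1})\leftrightarrow(v_t,v_{N+t})$ is verified separately on the past, present, and future blocks, the splitting of the sum and the matching of the normalization constants are routine.
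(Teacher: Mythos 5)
Your proof is correct, but it takes a genuinely different route from the paper's. The paper stays entirely at the level of the synthesized channel probabilities: it uses Bayes' theorem (with the i.i.d.\ uniform prior on the bits) to trade the $2M$-bit conditioning on the left-hand side for conditioning on the single last bit $u_{2jM+2M-1}$, picking up a factor $2^{2M-1}$; it then applies Arikan's recursive transformation \eqref{eq:ArikanT2} exactly once to split into the two half-block factors; and finally it runs the same Bayes argument backwards on each factor (using that the $v$-bits are again i.i.d.\ uniform), at a cost of $2^{-(M-1)}$ each, so that all powers of two cancel. You instead descend to the raw channel: you unfold each synthesized probability as a $2^{-(N-M)}$-normalized marginal of ${\rm P}(\mathbf{y}|\mathbf{u})$, invoke the channel-combining factorization ${\rm P}(y_0^{2N-1}|u_0^{2N-1})={\rm P}(y_0^{N-1}|v_0^{N-1})\,{\rm P}(y_N^{2N-1}|v_N^{2N-1})$ (which indeed underlies \eqref{eq:ArikanT1}--\eqref{eq:ArikanT2}, so taking it as given is legitimate), change variables through the bijection $(u_{2t},u_{2t+1})\leftrightarrow(v_t,v_{N+t})$, and let the sum split because the two factors depend on disjoint blocks of $v$-variables. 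Your index bookkeeping --- past, present, and future blocks mapping to the stated $v$-ranges --- is right, and the normalization $2^{-(2N-2M)}=\bigl(2^{-(N-M)}\bigr)^2$ matches. What the paper's route buys is brevity and reuse: it needs only the already-displayed identity \eqref{eq:ArikanT2} plus two one-line prior computations, with no summation over the whole block. What your route buys is transparency and self-containedness: it exposes the structural reason the factorization holds (the two halves of the combined channel see disjoint halves of the $v$-vector) rather than funneling everything through a single-bit transformation. One small point you should make explicit: identifying each normalized marginal in the $v$-variables as a symbol transition probability of a length-$N$ channel requires that the $v$-bits are themselves i.i.d.\ uniform; this follows from the linearity and invertibility of the map $u\mapsto v$, and the paper states this fact explicitly before its analogous step.
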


\begin{proof}
According to Bayes' theorem, 
\begin{equation}
\label{eq:prop2}
\begin{split}
{\rm P}(y_0^{2N-1},&u_0^{2jM-1}|u_{2jM}^{2jM+2M-1})\\
&= \frac{{\rm
  P}(y_0^{2N-1},u_0^{2jM+2M-2}|u_{2jM+2M-1})}{{\rm
  P}(u_{2jM}^{2jM+2M-2}|u_{2jM+2M-1})}.
\end{split}
\end{equation}
Because all bits of $\mathbf{u}$ are independent and each bit has an equal
probability of being a 0 or 1,
\begin{equation*}
\begin{split}
{\rm  P}&(u_{2jM}^{2jM+2M-2}|u_{2jM+2M-1}) = {\rm  P}(u_{2jM}^{2jM+2M-2}) \\
&={\rm  P}(u_{2jM}){\rm  P}(u_{2jM})\cdots{\rm  P}(u_{2jM+2M-2})=2^{-(2M-1)}.
\end{split}
\end{equation*}
Therefore,
\begin{equation}
\label{eq:prop3}
\begin{split}
{\rm P}(y&_0^{2N-1},u_0^{2jM-1}|u_{2jM}^{2jM+2M-1})\\
&= 2^{(2M-1)}{\rm P}(y_0^{2N-1},u_0^{2jM+2M-2}|u_{2jM+2M-1}).
\end{split}
\end{equation}
According to Eq.~\eqref{eq:ArikanT2},
\begin{equation}
\label{eq:prop4}
\begin{split}
{\rm P}(y_0^{2N-1},&u_0^{2jM+2M-2}|u_{2jM+2M-1}) \\
&=\frac{1}{2}{\rm P}(y_0^{N-1},v_0^{jM+M-2}|v_{jM+M-1})\\
&\cdot{\rm P}(y_N^{2N-1},v_N^{N+jM+M-2}|v_{N+jM+M-1}).
\end{split}
\end{equation}

According to the definition of $v_{0}^{N-1}$, all bits of $v_{0}^{N-1}$ are
independent and ${\rm P}(v_j=0)={\rm P}(v_j=1)=\frac{1}{2}$ for $0 \leq j < N$.
Then we have
\begin{equation}
\label{eq:prop5}
\begin{split}
{\rm P}(y_0^{N-1},&v_0^{jM+M-2}|v_{jM+M-1}) \\
&= 2^{-(M-1)}{\rm P}(y_0^{N-1},v_0^{jM-1}|v_{jM}^{jM+M-1}).
\end{split}
\end{equation}
Similarly,
\begin{equation}
\label{eq:prop6}
\begin{split}
{\rm P}(y&_N^{2N-1},v_N^{N+jM+M-2}|v_{N+jM+M-1})\\
&= 2^{-(M-1)}{\rm P}(y_N^{2N-1},v_N^{N+jM-1}|v_{N+jM}^{N+jM+M-1}).
\end{split}
\end{equation}
Then, by equations ~\eqref{eq:prop3} $\sim$ \eqref{eq:prop6},
Eq.~\eqref{eq:prop1} is obtained.
\end{proof}
%We call Eq.~\eqref{eq:prop1} as the symbol-based recursive channel
%combination relationship. 

Similar to the SC decoding algorithm, an $M$-bit symbol-based SC decoding
algorithm can be represented by using a message flow graph (MFG) as well, where
a channel transition probability is referred as a {\it message} for the sake of
convenience. If the code length of a polar code is $N$, the MFG can be divided
into $(n+1)$ stages ${\rm S}_0, {\rm S}_1, 
\cdots, {\rm S}_n$: one initial stage ${\rm S}_0$ and $n$ calculation stages. For the SC
decoding algorithm, all calculation stages carry out the calculation of Eq.~\eqref{eq:ArikanT1}
and \eqref{eq:ArikanT2}. However, for the $M$-bit symbol-based SC decoding
algorithm, the Arikan's recursive transformations are performed in the first
$(n-m)$ calculation stages, called channel transformation stages. In the last $m$ calculation
stages, called channel combination stages, Eq.~\eqref{eq:prop1} is used to compute
messages. Therefore, an $M$-bit symbol-based SC decoding algorithm contains two 
parts. The first part contains calculations of the first $n-m$ stages and
consists of $M$ SC decoders for polar codes of length $\frac{N}{M}$. These SC
decoders are called as component decoders. There are no message exchange between
these component decoders. Channel combination stages use outputs of channel
transformation stages to calculate symbol-based messages and feed the
estimated symbol back to component decoders to update partial-sums.

For example, as shown inFig.~\ref{fig:MFG_8_4}, the MFG of a
four-bit symbol-based SC decoding algorithm for a polar code with $L=8$  has four stages. Messages of the initial stage (S0) come
from the channel directly. Messages of the first stage (S1) are
calculated with Arikan's recursive transformations. Messages of the second
and third stages (S2 and S3) are calculated with Eq.~\eqref{eq:prop1}. The four
small gray boxes on the right are four component SC decoders. And stages in the
big gray box on the left are channel combination stages. Here, 
\begin{align*}
v_0^3&=u_{0,e}^7\oplus u_{0,o}^7, \hspace{4mm} v_4^7=u_{0,o}^7,\\
w_0&=v_0\oplus v_1=u_0\oplus u_1 \oplus u_2 \oplus u_3,\\
w_1&=v_2\oplus v_3=u_4\oplus u_5 \oplus u_6 \oplus u_7,\\
w_2&=v_1=u_2\oplus u_3,\\
w_3&=v_3=u_6\oplus u_7,\\
w_4&=v_4\oplus v_5 = u_1\oplus u_3,\\
w_5&=v_6\oplus v_7 = u_5 \oplus u_7,\\
w_6&=v_5 = u_3,\\
w_7&=v_7 = u_7. 
\end{align*}

\begin{figure}[htbp]
\centering
\includegraphics[width=6cm]{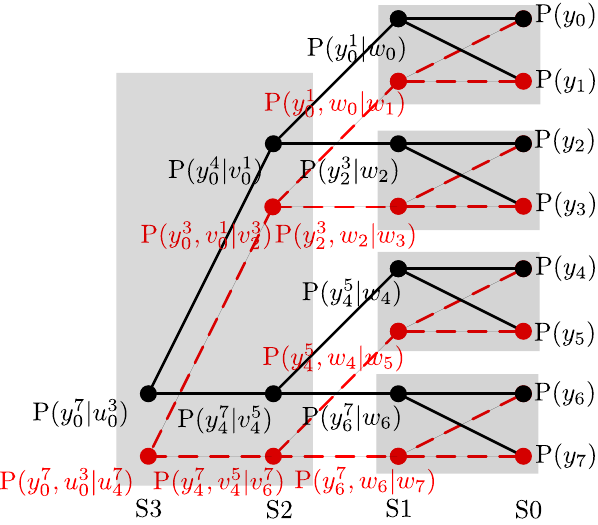}
\caption{The message flow graph of a four-bit symbol-based SC decoding algorithm
  for a polar code with a length of eight.}
\label{fig:MFG_8_4}
\end{figure}

We can take advantage of the symbol-based channel combination to reduce
complexity of calculating the symbol-based channel transition probability. In
\cite{ParSC}, an ML decoder is use to calculate the symbol-based message of 
stage ${\rm S}_{n}$ from output of component decoders directly. There are $2^M$ possible values for an $M$-bit
symbol. \cite{6464502} shows that $(M-1)$ additions are needed to calculate the
log-likelihood (LL) message corresponding to each
value. Therefore, an ML decoder needs $2^M(M-1)$ additions in total. 
In channel combination stages, there are $2^{n-i}$ nodes in the $i$-th stage and each
node contains $2^{M+i-n}$ messages. One addition is
needed to compute each LL message according to
Eq.~\eqref{eq:prop1}. Hence, channel combination stages need
$\sum_{i=0}^{m-1}2^{i}2^{\frac{M}{2^{i}}}$ additions in total. For the example
shown in Fig.~\ref{fig:MFG_8_4}, the ML decoder needs $2^4(4-1)=48$
additions. The channel combination stages need only $2^4+2\times 2^2=24$
additions, which is only a half of those needed by the ML decoder.  

In terms of the error performance, simulations of
\cite{ParSC} show that there is no observed performance loss for the the $M$-bit symbol-based SC decoding
algorithm using the ML decoder to calculate the symbol-based message, compared
with the SC polar decoding algorithm. Since our
channel combination relationship can be used to provide the same calculation results as the ML decoder used in
\cite{ParSC} does, 
%our proposed $M$-bit symbol-based SC decoding algorithm has the same performance
%as that of \cite{ParSC}. Consequently, 
the $M$-bit symbol-based SC decoding algorithm using the
symbol-based channel combination relationship does not have any observed
performance degradation compared with the SC decoding algorithm.

\subsection{Generalized Symbol-based SCL Decoding Algorithm for Polar Codes}
The symbol-based SCL decoding algorithm is more complex than the SCL algorithm, since
the path expansion coefficient is not a constant any more. In the SCL algorithm, for each
information bit, the path expansion coefficient is two. But for the $M$-bit
symbol-based SCL decoding algorithm, the path expansion coefficient is
$2^{\lvert\mathcal{AM}_j\rvert}$, which depends on
the number of information bits in an $M$-bit symbol. The $M$-bit symbol-based
SCL decoding algorithm is described in Alg.~\ref{alg:SBSCL}.

\begin{algorithm}
\caption{$M$-bit Symbol-based SCL Decoding Algorithm \cite{ParSC}}
\label{alg:SBSCL}
\LinesNumbered
$\alpha=1$\;
\For{$j=0:\frac{N}{M}-1$}{
$\beta=2^{\lvert\mathcal{AM}_j\rvert}$\;
\uIf{$\beta==1$}{
\For{$i=0:\alpha-1$}{
$(\mathcal{L}_i)_{jM}^{jM+M-1} = \mathbf{0}$\;
}
}
\uElseIf{$\alpha\beta\leq L$}{
$u_{\mathcal{AM}_j^c}=\mathbf{0}$\;
\For{$k=0:\beta-1$}{
$u_{\mathcal{AM}_j}=${\tt dec2bin}$(k,\lvert\mathcal{AM}_j\rvert)$\;
\For{$i=0:\alpha-1$}{
$t=i+k\alpha$\;
$(\mathcal{L}_{t})_{0}^{jM+M-1} = ((\mathcal{L}_i)_{0}^{jM-1},u_{jM}^{jM+M-1})$\;
}
}
$\alpha=\alpha\beta$\;
}
\Else{
$u_{\mathcal{AM}_j^c}=\mathbf{0}$\;
\For{$k=0:\beta-1$}{
$u_{\mathcal{AM}_j}=${\tt dec2bin}$(k,\lvert\mathcal{AM}_j\rvert)$\;
\For{$i=0:L-1$}{
$t=i+kL$\;
${\sf S}[t]{\sf .P}={\rm P}(\mathbf{y},(\mathcal{L}_i)_0^{jM-1}|u_{jM}^{jM+M-1})$\;
${\sf S}[t]{\sf .L}=(\mathcal{L}_i)_0^{jM-1}$\;
${\sf S}[t]{\sf .U}=u_{jM}^{jM+M-1}$\;
}
}
{\tt sortPDecrement}({\sf S})\;
\For{$i=0:L-1$}{
$(\mathcal{L}_i)_0^{jM+M-1}=({\sf S}[i]{\rm .L},{\sf S}[i]{\rm .U})$\;
}
$\alpha=L$\;
}
}
\end{algorithm}

Here, without any ambiguity, $\mathbf{0}$ represents a zero vector whose bit-width
is determined by the left-hand operator. The function {\tt dec2bin}$(d,b)$
converts a decimal number $d$ to a $b$-bit binary vector. Eq.~\eqref{eq:prop1}
can also be used to calculate the symbol-based channel transition
probability corresponding to each list, i.e.\ ${\rm P}(\mathbf{y},(\mathcal{L}_i)_0^{jM-1}|u_{jM}^{jM+M-1})$.

\begin{figure}[htbp]
\centering
\includegraphics[width=7.5cm]{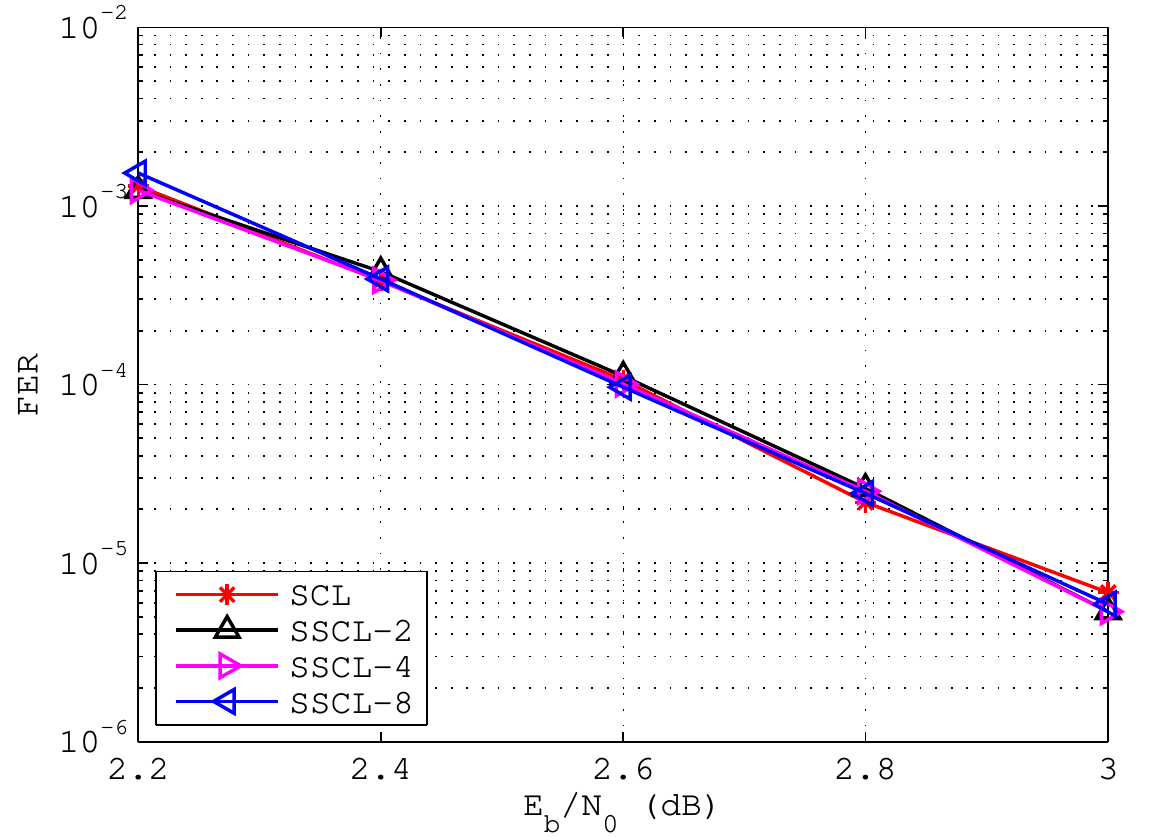}
\caption{FERs of symbol-based SCL decoding algorithms for a
  (1024, 512) polar code with $L=4$.}
\label{fig:SBSCL_PER}
\end{figure}

Fig.~\ref{fig:SBSCL_PER} shows FERs of symbol-based SCL
decoding algorithms for a (1024,512) polar code with $L=4$. Performance differences
between these curves are very minor. Therefore, by applying Eq.~\eqref{eq:prop1} 
the symbol-based SCL algorithm does not introduce the obvious performance loss
compared with the SCL decoding algorithm. Even with different $M$s, these
performance curves are very close. Here, SSCL-$i$ denotes the $i$-bit symbol-based
SCL decoding algorithm.

\section{Two-Stage List Pruning Network}
\label{sec:TSLPN}
For the $M$-bit symbol-based SCL decoding algorithm, the maximum path expansion
coefficient is $2^M$, i.e.\, each existing list generates $2^M$ list
candidates. Therefore, in the worst-case scenario, $L$ most-reliable lists should be sorted 
out of $2^ML$ list candidates. To facilitate this sorting network, we propose
a two-stage list pruning network. In the first stage, $q$ most-reliable lists
are found
out among $2^M$ list candidates of each existing list. Therefore, there are $qL$ list
candidates left. In the second stage, the $L$ most-reliable lists are sorted out
from the $qL$ list candidates generate by the first stage. The message flow of a two-stage list pruning
network is illustrated in Fig.~\ref{fig:TSLPP}. 
 
\begin{figure}[htbp]
\centering
\includegraphics[width=7.5cm]{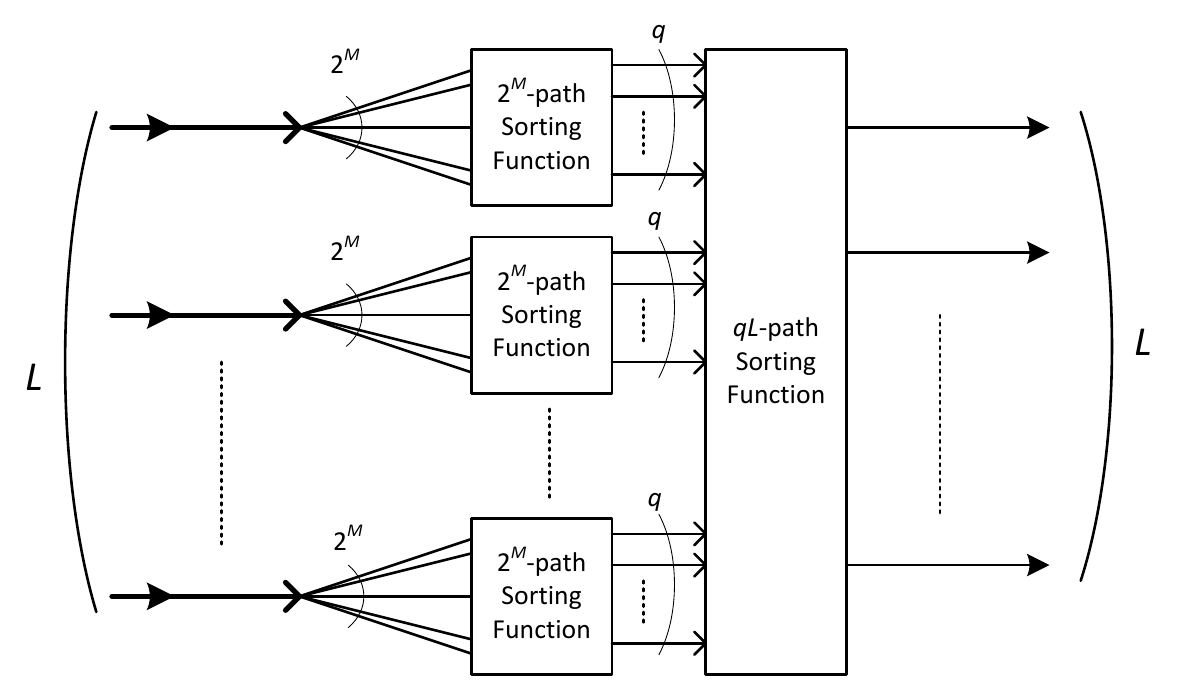}
\caption{Message flow for a two-stage list pruning network.}
\label{fig:TSLPP}
\end{figure}

It is easy to prove that if $q \geq L$ and $2^M > L$, the $L$ lists found by the two-stage list
pruning network are exactly the $L$ most-reliable lists among the $2^ML$ list
candidates. Therefore, we only consider $q \leq L$. In terms of complexity, a smaller $q$ leads to a two-stage list pruning
network with lower complexity but the probability that the $L$ lists found by the two-stage list
pruning network are exactly the $L$ most-reliable lists among the $2^ML$ list
candidates decreases as well. This may cause some performance loss. 

%Fig.~\ref{fig:TSLPP_4}, \ref{fig:TSLPP_8}, and \ref{fig:TSLPP_16}
%show the impact of different $q$s to FERs of an eight-bit
%symbol-based SCL decoding algorithm for a (1024, 512) polar code with the list
%size of 4, 8, and 16, respectively. When $L=4$ and $q=2$, the SSCL-8 algorithm
%shows an FER performance
%loss of about 0.2 dB. When $L=8$, compared with the FER performance of $q=8$, the
%performance loss due to $q=4$ is about 0.04 dB and the performance degradation
%due to $q=2$ is about 0.4dB. In this case, to reduce complexity and the latency
%of the two-stage list pruning network, $q$ can be set as 4. When $L=16$, compared with the
%FER performance of $q=16$, there is no observed performance degradation if $q=8$. The
%performance degradation caused by $q=4$ is about 0.08 dB. Therefore, for 
%$L=16$, to reduce complexity and the latency of the two-stage list pruning network,
%$q=8$. If the 0.08 dB performance loss is tolerated, the value of $q$ can
%be reduced further to four.
Fig.~\ref{fig:TSLPP_4} and \ref{fig:TSLPP_16}
show how different $q$s affect FERs of an SSCL-8 decoding algorithm for a (1024, 512) polar code with $L=4$ and $L=16$, respectively. When $L=4$ and $q=2$, the SSCL-8 decoding algorithm
shows an FER performance
loss of about 0.2 dB. When $L=16$, compared with the
FER performance with $q=16$, there is no observed performance degradation when $q=8$. The
performance loss due to $q=4$ is about 0.08 dB. Therefore, for 
$L=16$, to reduce complexity and the latency of the two-stage list pruning network,
$q$ can be $8$. If the 0.08 dB performance loss is tolerated, $q$ can be reduced further to four.

\begin{figure}[htbp]
\centering
\includegraphics[width=7.5cm]{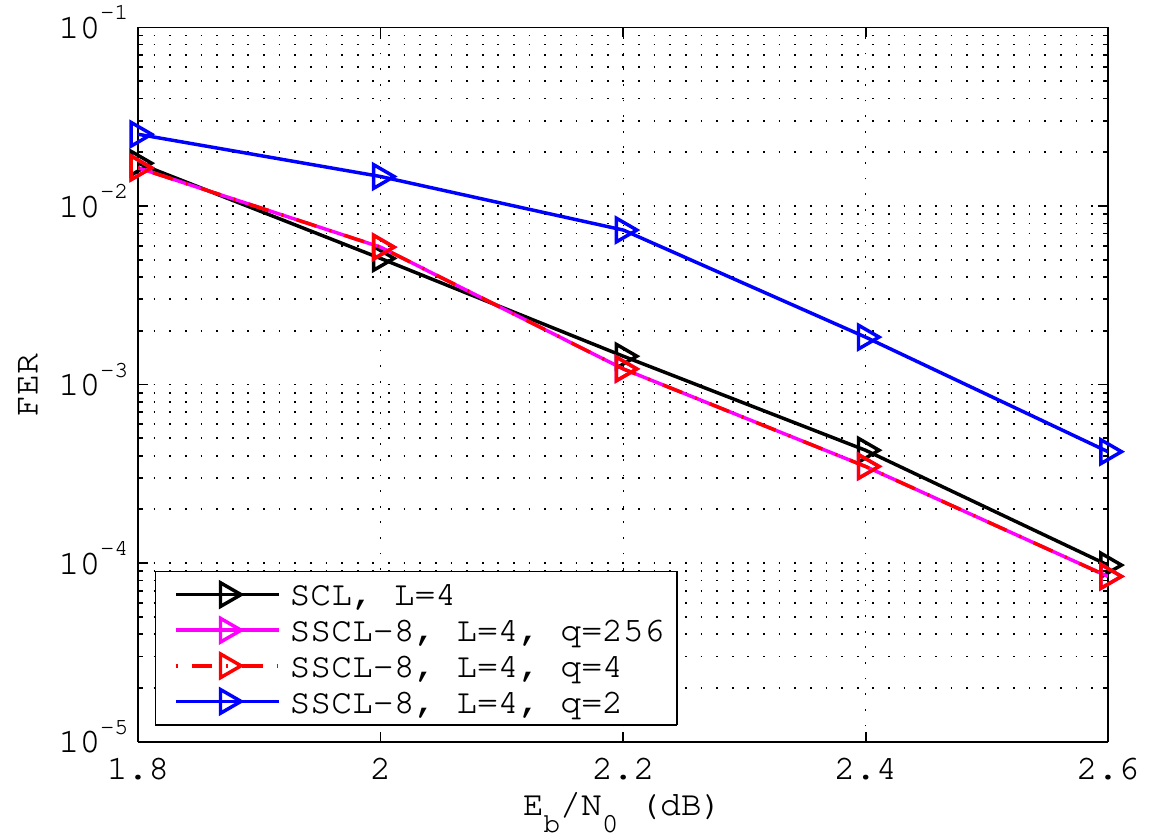}
\caption{FERs of SSCL-8 decoder for a
  (1024, 512) polar code with $L=4$.}
\label{fig:TSLPP_4}
\end{figure}

%\begin{figure}[htbp]
%\centering
%\includegraphics[width=8.5cm]{../latex/2S_8.eps}
%\caption{FER performance of 8-bit symbol-based SCL decoding algorithm for a
%  (1024, 512) polar code with a list size of eight.}
%\label{fig:TSLPP_8}
%\end{figure}

\begin{figure}[htbp]
\centering
\includegraphics[width=7.5cm]{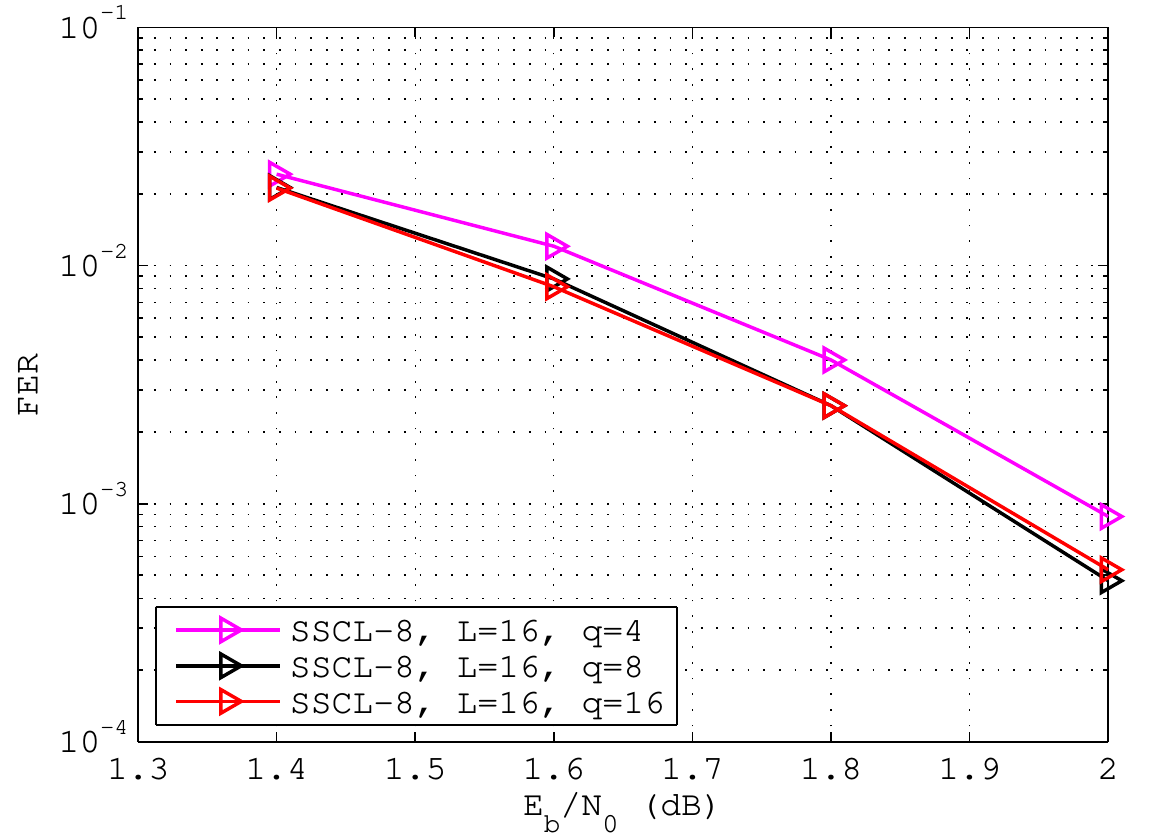}
\caption{FERs of SSCL-8 decoder for a
  (1024, 512) polar code with $L=16$.}
\label{fig:TSLPP_16}
\end{figure}

Similarly, as shown in Fig.~\ref{fig:TSLPP_8_2048}, for a (2048,1433) polar
code, the two stage list-pruning network of $q=4$ helps to reduce 
the complexity of SSCL-8 decoder without the obvious performance loss.

\begin{figure}[htbp]
\centering
\includegraphics[width=7.5cm]{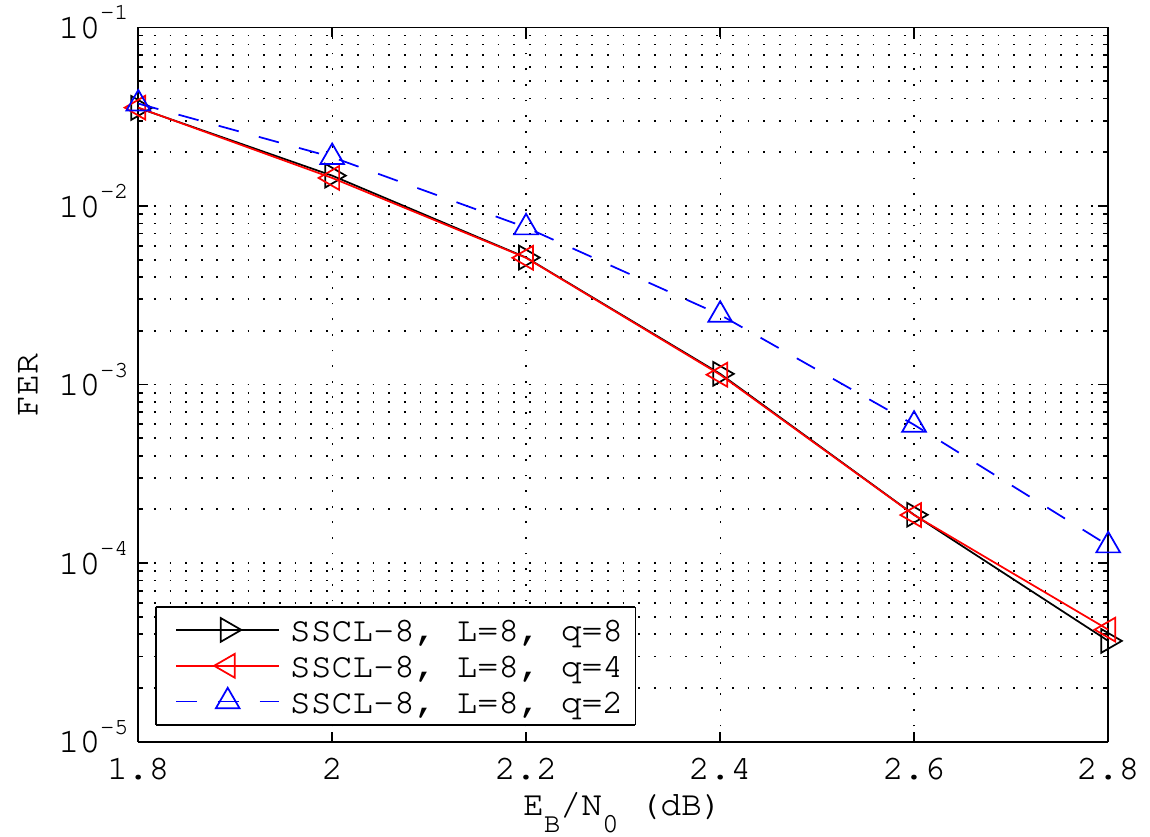}
\caption{FERs of SSCL-8 decoding algorithm for a
  (2048, 1433) polar code with $L=8$.}
\label{fig:TSLPP_8_2048}
\end{figure}

Fig.~\ref{fig:TSLPP_P4_L8} shows FERs of an SSCL-4 decoder for a (1024,512) polar code with $L=8$ while
different $q$s are used. Compared with the case of $q=8$, there is no obvious FER
performance loss when $q=4$. However, $q=2$ incurs an FER performance loss of about
0.3 dB when the FER is $10^{-3}$.

\begin{figure}[htbp]
\centering
\includegraphics[width=7.5cm]{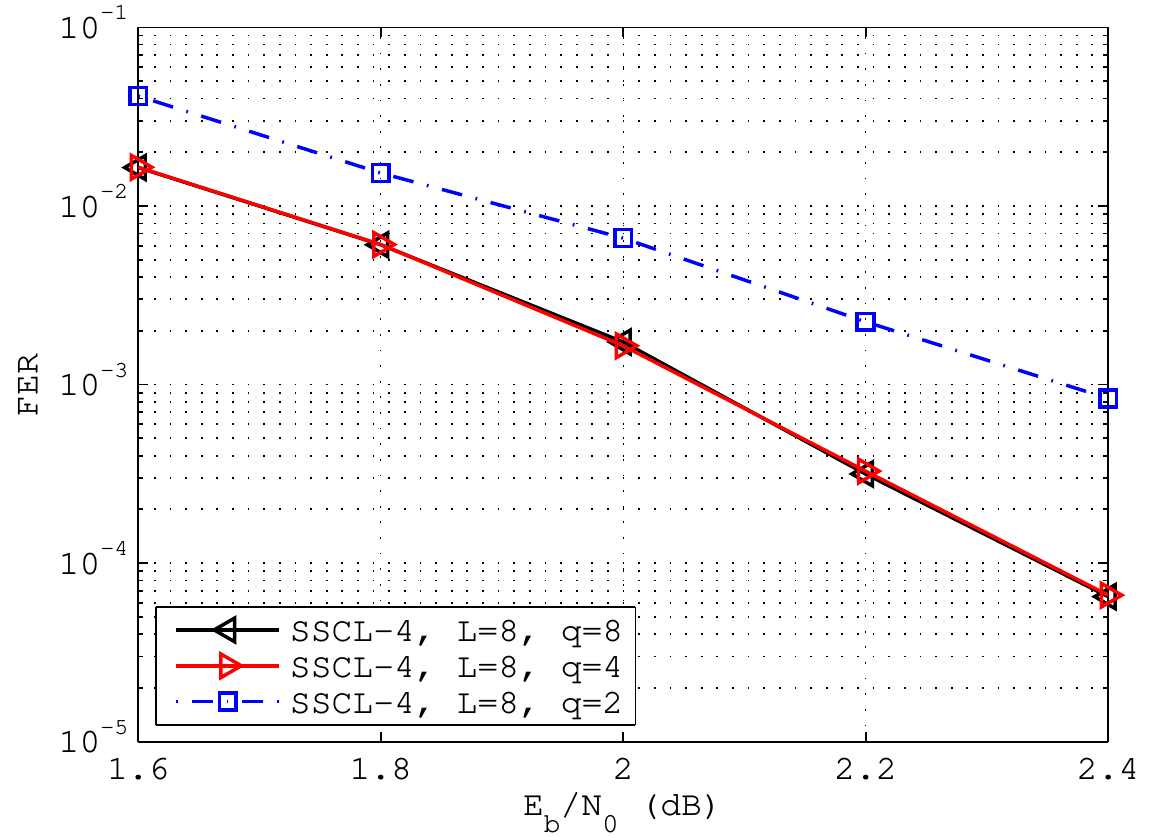}
\caption{FERs of SSCL-4 decoder for a
  (1024, 512) polar code with $L=8$.}
\label{fig:TSLPP_P4_L8}
\end{figure}

To illustrate advantages of two-stage list pruning network, two tree
sorting networks are designed to find the 8 maximal values out of 128 values which can be
used in the SSCL-4 decoder with $L=8$. One is a
conventional tree sorting network, shown in Fig.~\ref{fig:STP4L8}, refered to as
CTSN. The other is
a two-stage tree sorting network with $q=4$, shown in
Fig.~\ref{fig:STP4L8TS}, refered to as TSTSN. Here, the "ps16to8" block is a bitonic sorter which
finds the maximal 8 values out of 16 values. The bitonic sorter, "ps8to4", finds
the maximal 4 values out of 8 values.
\begin{figure}[htbp]
\centering
\includegraphics[width=7cm]{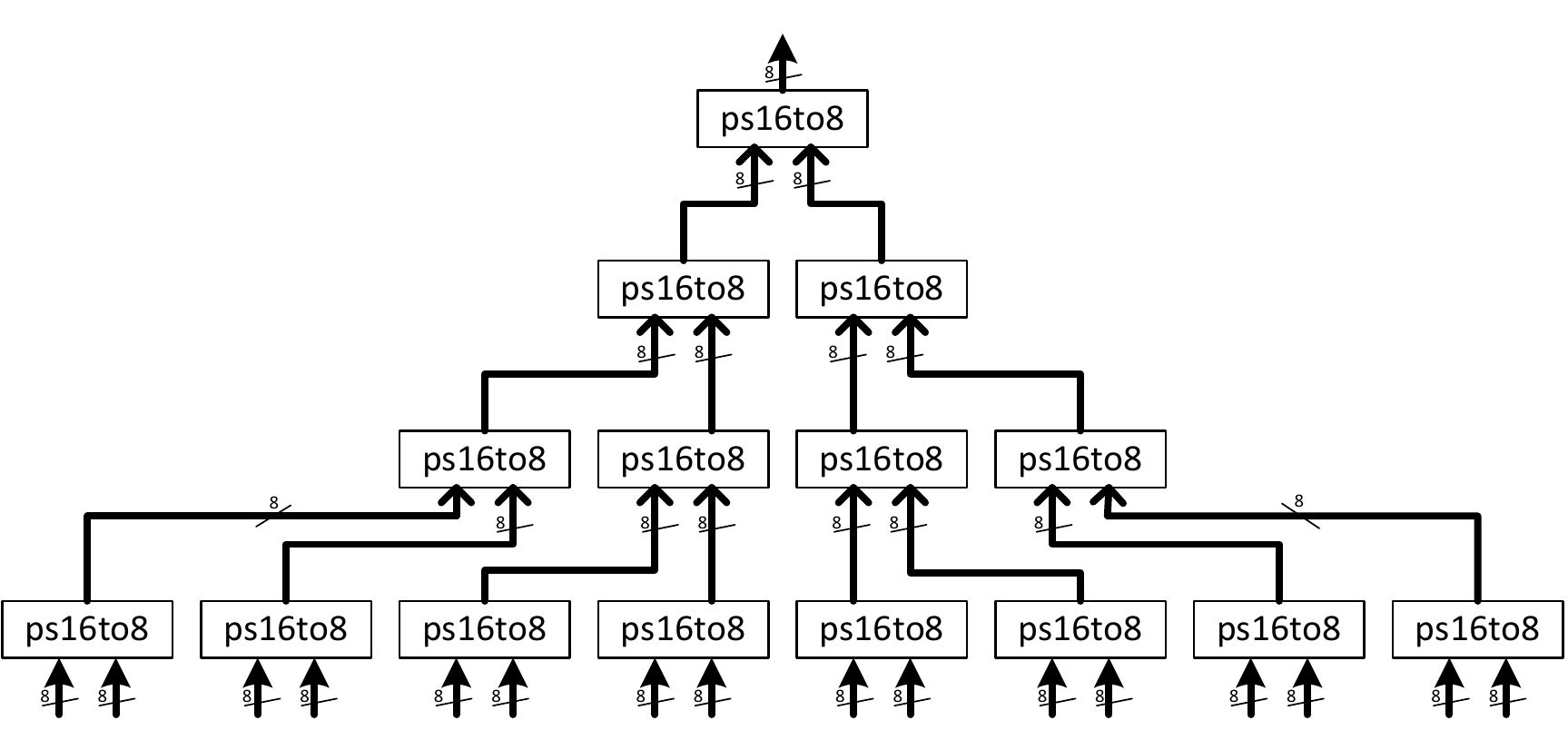}
\caption{A conventional tree sorting network to find the 8 maximal values out of 128 values.}
\label{fig:STP4L8}
\end{figure}

\begin{figure}[htbp]
\centering
\includegraphics[width=7cm]{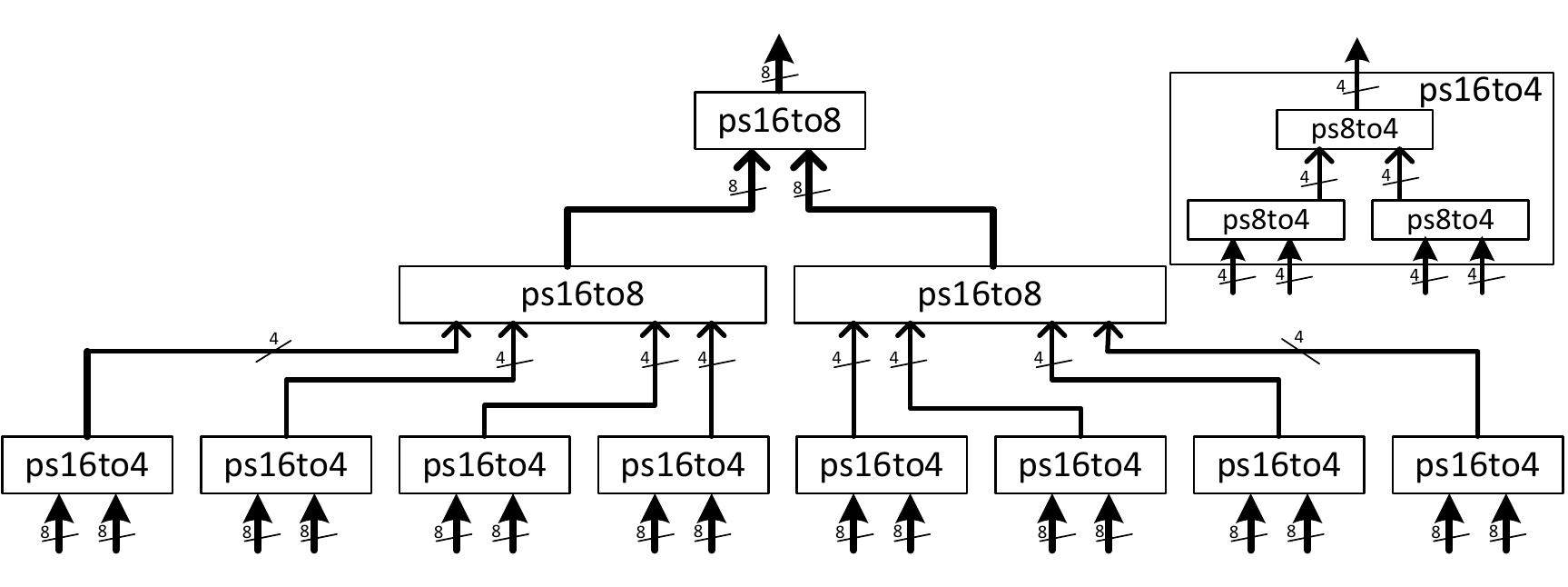}
\caption{A two-stage tree sorting network to find the 8 maximal values out of
  128 values.}
\label{fig:STP4L8TS}
\end{figure}

We implement these two sorting networks and use the RTL compiler to synthesize
them with a TSMC 90-nm CMOS technology. The TSTSN has a smaller area and a shorter
critical path than the CTSN, as shown in Table~\ref{tab:TS_syn_result}. Besides, the TSTSN does not introduce any obvious
performance degradation as shown in Fig.~\ref{fig:TSLPP_P4_L8}.

\begin{table}[hbtp]
\begin{center}
\caption{Synthesizing results for CTSN and TSTSN.}
\label{tab:TS_syn_result}
\begin{tabular}{|c|c|c|}
\hline
Design & area ($\text{mm}^2$) & Critical Path Delay (ns) \\ \hline
CTSN & 0.206 & 7.463 \\ \hline
TSTSN & 0.134 & 5.861 \\ \hline
\end{tabular}
\end{center}
\end{table}

\section{Conclusion}
\label{sec:conclusion}
In this paper, we discuss the generalized symbol-based SC and SCL decoding algorithm for
polar codes and derive the recursive procedure to calculate the symbol-based
channel transition probability. This recursive procedure needs less
additions than the ML scheme used in \cite{ParSC}. A two-stage list pruning
network is also proposed to simplify the $L$-list finding network. 

%\section*{Acknowledgment}
%
%
%The authors would like to thank...

% trigger a \newpage just before the given reference
% number - used to balance the columns on the last page
% adjust value as needed - may need to be readjusted if
% the document is modified later
%\IEEEtriggeratref{8}
% The "triggered" command can be changed if desired:
%\IEEEtriggercmd{\enlargethispage{-5in}}

% references section

% can use a bibliography generated by BibTeX as a .bbl file
% BibTeX documentation can be easily obtained at:
% http://www.ctan.org/tex-archive/biblio/bibtex/contrib/doc/
% The IEEEtran BibTeX style support page is at:
% http://www.michaelshell.org/tex/ieeetran/bibtex/
\bibliographystyle{IEEEtran}
% argument is your BibTeX string definitions and bibliography database(s)
%\bibliography{IEEEabrv,../bib/paper}
%\bibliography{../latex/bibtex/IEEEabrv,../latex/Polar_abrv}
%\bibliography{../latex/bibtex/IEEEfull,../latex/Polar}

%
% <OR> manually copy in the resultant .bbl file
% set second argument of \begin to the number of references
% (used to reserve space for the reference number labels box)
%\begin{thebibliography}{1}
%
%\bibitem{IEEEhowto:kopka}
%H.~Kopka and P.~W. Daly, \emph{A Guide to \LaTeX}, 3rd~ed.\hskip 1em plus
%  0.5em minus 0.4em\relax Harlow, England: Addison-Wesley, 1999.
%
%\end{thebibliography}

% that's all folks
\end{document}